\newcommand{\Conv}{%
	\mathop{\scalebox{1.5}{\raisebox{-0.2ex}{$\circledast$}}
	}
}
\DeclarePairedDelimiter{\ceil}{\lceil}{\rceil}
\DeclareFontFamily{OMX}{yhex}{}
\DeclareFontShape{OMX}{yhex}{m}{n}{<->yhcmex10}{}
\DeclareSymbolFont{yhlargesymbols}{OMX}{yhex}{m}{n}
\DeclareMathAccent{\wideparen}{\mathord}{yhlargesymbols}{"F3}
\let\oldtheorem\theorem
\renewcommand{\theorem}{\oldtheorem\normalfont}
\let\oldproposition\proposition
\renewcommand{\proposition}{\oldproposition\normalfont}
\newtheorem{lemma}{\bf Lemma}
\let\oldlemma\lemma
\renewcommand{\lemma}{\oldlemma\normalfont}
\let\oldexample\example
\renewcommand{\example}{\oldexample\normalfont}
\let\olddefinition\definition
\renewcommand{\definition}{\olddefinition\normalfont}
\newtheorem{remark}{Remark}
\let\oldremark\remark
\renewcommand{\remark}{\oldremark\normalfont}
\DeclareMathOperator*{\argmax}{argmax}
\begin{document}
\title{\huge Ultra Reliable, Low Latency  Vehicle-to-Infrastructure Wireless Communications with Edge Computing}
\author{
\authorblockN{Md Mostofa Kamal Tareq$^1$, Omid Semiari$^1$, Mohsen Amini Salehi$^2$, and Walid Saad$^3$}\\\vspace*{-1em}
\authorblockA{\small $^{1}$Department of Electrical and Computer Engineering, Georgia Southern University, Statesboro, GA, USA,\\ 
		Email: \protect\url{{mt07092,osemiari}@georgiasouthern.edu}\\
$^{2}$School of Computing and Informatics, University of Louisiana at Lafayette, LA, USA,
Email: \protect\url{amini@louisiana.edu},\\
$^3$Wireless@VT, Bradley Department of Electrical and Computer Engineering, Virginia Tech, Blacksburg, VA, USA, Email: \protect\url{walids@vt.edu}
}
   \thanks{This research was supported by the U.S. National Science Foundation under Grants  IIS-1633363 and CNS-1739642.
   	}%
  }
%
\maketitle
\begin{abstract}
Ultra reliable, low latency vehicle-to-infrastructure (V2I) communications is a key requirement for seamless operation of autonomous vehicles (AVs) in future smart cities. To this end, cellular small base stations (SBSs) with edge computing capabilities can reduce the end-to-end (E2E) service delay by processing requested tasks from AVs locally, without forwarding the tasks to a remote cloud server. Nonetheless, due to the limited computational capabilities of the SBSs, coupled with the scarcity of the wireless bandwidth resources, minimizing the E2E latency for AVs and achieving a reliable V2I network is challenging. In this paper, a novel algorithm is proposed to jointly optimize AVs-to-SBSs association and bandwidth allocation to maximize the reliability of the V2I network. By using tools from \emph{labor matching markets}, the proposed framework can effectively perform distributed association of AVs to SBSs, while accounting for the latency needs of AVs as well as the limited computational and bandwidth resources of SBSs.  Moreover, the convergence of the proposed algorithm to a core allocation between AVs and SBSs is proved and its ability to capture interdependent computational and transmission latencies for AVs in a V2I network is characterized. Simulation results show that by optimizing the E2E latency, the proposed algorithm substantially outperforms conventional cell association schemes, in terms of service reliability and latency.  


 \vspace{-0cm}
\end{abstract}
\section{Introduction} \vspace{-0cm}
Autonomous vehicles (AVs) are among main transformative technologies in future smart cities. The deployment of AVs can help in reducing traffic congestions, increasing road safety, minimizing fuel consumption, and enhancing the overall driving experience \cite{7498068}. To effectively operate AVs, reliable vehicle-to-infrastructure (V2I) communications is required, particularly with widely deployed cellular base stations (BSs) to support connectivity and control for vehicles \cite{7992934}. In particular, cellular BSs can facilitate management of \emph{tasks} that AVs need to execute, by providing road information ahead of time, delivering high definition maps (HD-maps) for AVs, or maintaining coordination among AVs to prevent congestion \cite{8390386,8422983}. Nonetheless, most of the tasks associated with AVs are delay intolerant and require reliable processing with low latency. Therefore, the V2I wireless system must be capable of managing AVs' requested tasks under stringent latency and reliability  requirements \cite{7990497,magazine,mehdi}.

To minimize the V2I communications latency, \emph{edge computing} is an attractive solution that enables BSs to process AVs' requested tasks locally, without relying on  remote cloud servers \cite{7469991,7901477,AMINISALEHI201696}. However, several challenges must be addressed to seamlessly integrate edge computing with cellular V2I communications. \emph{First}, small cell BSs (SBSs) have limited computational resources and, hence, they can be easily \emph{overloaded} with AV tasks. \emph{Second}, the end-to-end (E2E) latency for AV task management in a V2I system depends on: a) uplink transmission latency (i.e., to request a task from the SBS); b) computational latency at the edge machine; and c) downlink transmission latency to send the processed task to the AV. Once coupled with heterogeneous task types and random wireless channel variations, optimizing the E2E quality-of-service (QoS) in V2I networks becomes very challenging and requires efficient \emph{AV-to-SBS association} jointly with \emph{wireless and computing resource management}.

Several works have been recently sought to address the aforementioned V2I challenges \cite{7990497,7828299,8004160,7994678,aaabbb}. In \cite{7990497}, the authors study the impact of transmission time interval (TTI) design on the performance of low-latency vehicular communications. In \cite{8004160}, the authors survey various software-defined latency control schemes in V2I networks. In \cite{7994678}, an edge computing framework is developed to reduce computational latency for vehicular services. The work in \cite{aaabbb} proposes different radio resource management methods for achieving low-latency vehicular communications. Meanwhile, most works on AV-SBS association (e.g., see \cite{6497017} and references therein) rely on conventional metrics such as maximum signal-to-interference-plus-noise ratio (max-SINR) and maximum received signal strength indicator (max-RSSI).


However, the prior art in \cite{7990497,7828299,8004160,7994678,aaabbb} studies computing and communication latency in isolation, rather than from an E2E perspective. For example, the work in \cite{7990497} considers a fixed value for the computational latency and neglects E2E latency. In addition, the authors in \cite{aaabbb}, focus solely on wireless resource management, without considering the computational latency. Moreover, existing works mostly rely on centralized resource management, while fast and efficient distributed algorithms are needed to manage tasks in dense V2I networks.


The main contribution of this paper is, thus, a novel low-latency V2I communications framework that maximizes the reliability of the V2I network by jointly optimizing AV-to-SBS association along with wireless resource management.  To this end, we build a novel  solution, based on \emph{matching theory}, that allows to account for the E2E latency requirements of AVs' tasks, as well as the limited computational and bandwidth resources of SBSs.   
To solve this problem, we propose a novel algorithm that iteratively associates AVs to SBSs, along with allocation of bandwidth. The proposed algorithm is proved to converge to a \emph{core allocation} between AVs and SBSs, thus guaranteeing the stability of the V2I network when using distributed implementations. Simulation results show that the proposed algorithm substantially improves the performance by maximizing the reliability of the V2I system and minimizing the E2E latency for AVs, compared with max-SINR and max-RSSI associations. 

The rest of this paper is organized as follows. Section II presents the system model. Section III presents the proposed algorithm. Simulation results are provided in section IV. Section V concludes the paper.

\section{System Model}
Consider a wireless cellular network composed of a set $\mathcal{N}$ of $N$ SBSs that are  distributed uniformly within a square area of size $A$. In this network, a set $\mathcal{M}$ of $M$ AVs are randomly deployed and must communicate with SBSs\footnote{We also consider road side units as SBSs.}.  Naturally,  seamless operation of AVs requires management of multiple tasks in real-time. Table \ref{tab1} summarizes a list of typical tasks with their latency requirements. To avoid latency at the backhaul network, SBSs are equipped with edge computing machines to process the AVs' requested tasks from the set $\mathcal{S}$, and send necessary information to the AVs. One example is HD-maps that  cannot be built by a single AV in real-time. In fact, an SBS can receive the location and sensing information from its associated AVs, process the data to build an HD-map, and send the HD-map to AVs. 
	\begin{table}[t!]
	\footnotesize
	\centering
	\caption{
		\vspace*{-0cm}Examples of V2I service latency requirements \cite{7990497}}\vspace*{-0.2cm}
	\begin{tabular}{|>{\centering\arraybackslash}m{2.4cm}|>{\centering\arraybackslash}m{2.5cm}|>{\centering\arraybackslash}m{2.2cm}|}
		\hline
		\bf{V2I Service} & \bf{Type}& \bf{Latency Requirement}  \\
		\hline
		Emergency Warning &  Safety &$100$ ms\\
		\hline
		See-through & Automated Driving& $50$ ms\\
		\hline
		Pre-crash Sensing Warning &Safety &$20$ ms\\
		\hline
		Automated Overtake &Automated Driving &$10$ ms \\
		\hline 
	\end{tabular}\label{tab1}\vspace{-2em}
\end{table}

Considering the edge computing capabilities of SBSs, the E2E latency for V2I communications between AV $m \in \mathcal{M}$ and an SBS $n \in \mathcal{N}$ can be defined as 
\begin{align}
	\tau(m,n;s) = \left[\tau_t(m,n;s)\right]T + \tau_p(m,n;s),
\end{align}
where $T$ is the duration of one TTI (in milliseconds) and $\tau_t(m,n;s)$ represents transmission latency, in terms of number of TTIs, for task $s$ of an AV $m$ associated with an SBS $n$. In addition, $\tau_p(m,n;s)$ is the computational latency at the edge unit of an SBS $n$ to process task $s$ of an AV $m$. Next, we characterize transmission and processing latencies in details.

\begin{figure}[t!]
	\centering
	\centerline{\includegraphics[width=8cm]{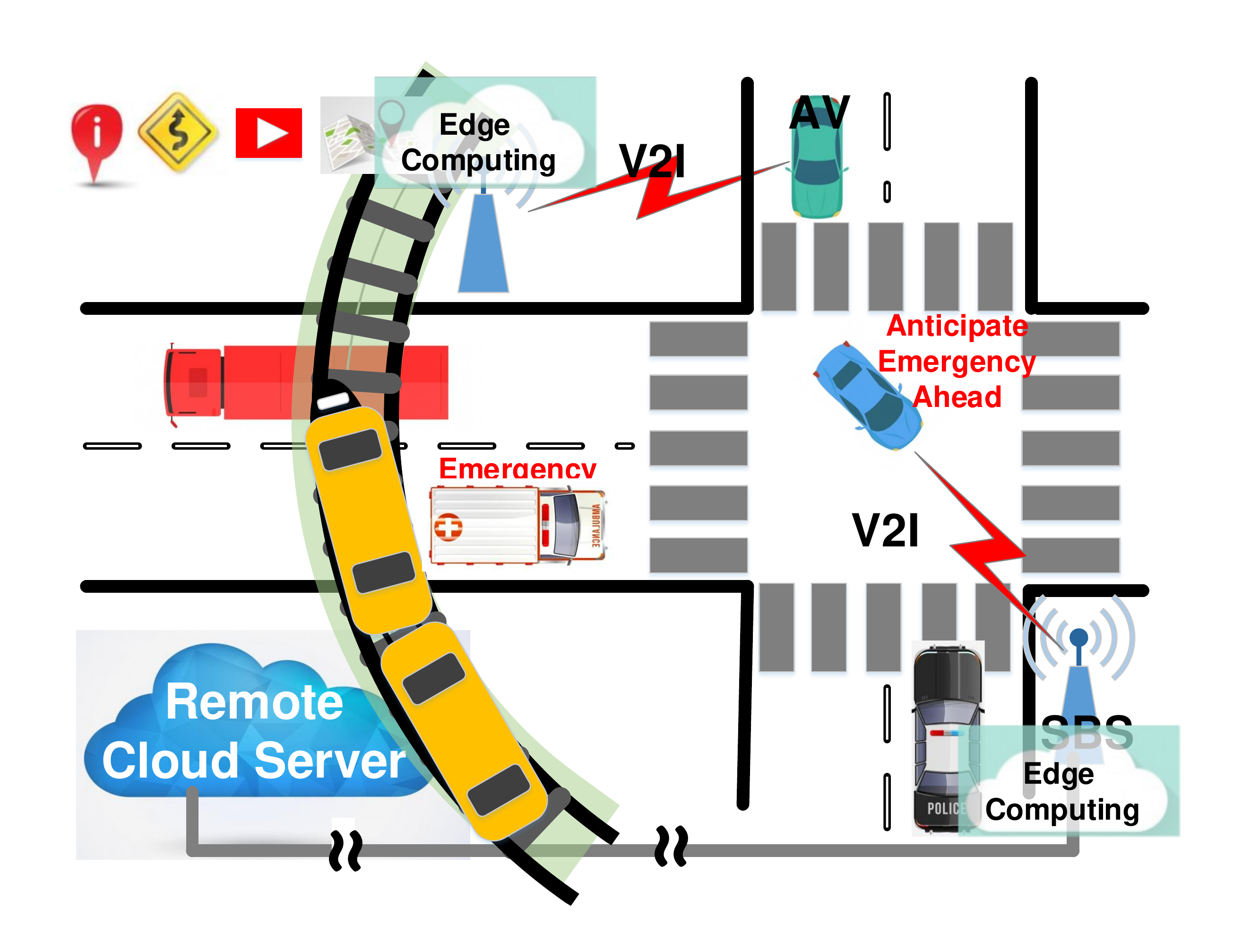}}\vspace{-1em}
	\caption{V2I network with edge computing capabilities.}\vspace{-2em}
	\label{model}
\end{figure}
\subsection{Wireless transmission latencies}
The overall transmission latency is the sum of both downlink and uplink transmission latencies and is given by:
\begin{align}\label{tau_t}
\tau_t(m,n;s) = \tau_d(m,n;s) + \tau_u(m,n;s),
\end{align} where $\tau_d(m,n;s)$ represents downlink transmission latency, in terms of number of TTIs, for AV $m$ to send its task $s$ to SBS $n$. This latency is given by:
\begin{align}\label{tau_d}
	\tau_d(m,n;s) =\ceil[\Big]{ \frac{I_d(s)}{R_d(m,n)T}}, 
\end{align}
where $\ceil{.}$ is a ceiling operation and $I_d(s)$ denotes the downlink packet size in bits, corresponding to task $s$. 
The denominator of \eqref{tau_d} represents the downlink data that can be transmitted within one TTI. For efficient utilization of time-frequency resources, we consider an orthogonal frequency-division multiplexing (OFDM) frame structure of bandwidth $W$, divided  equally into $K$ subchannels in a set $\mathcal{K}$, each with bandwidth $w$. Hence, the downlink data rate is:
\begin{align}\label{rate_d}
R_d(m,n) = w\sum_{k \in \mathcal{K}}y_{mnk}\log_2\left(1 + \gamma_d(m,n,k)\right),
\end{align}
where $y_{mnk}=1$, if subchannel $k$ is allocated by the SBS $n$ to AV $m$, otherwise $y_{mnk}=0$. $R_d(m,n)$ depends on the downlink SINR, $\gamma_d(m,n,k)$, given by:
\begin{align}\label{gamma_d}
\gamma_d(m,n,k) = \frac{G_mG_nP_nh_{mnk}L_{mn}}{\sum_{n' \neq n}{P_{mn'}} + \sigma_n^2},
\end{align}
where $P_n$, $P_{mn'}$, and $\sigma_n^2$ denote, respectively, the transmit power of SBS $n$, the received power from interfering SBS $n'$ at the AV $m$'s receiver, and the noise power. In \eqref{gamma_d}, $G_m$ and $G_n$ denote, respectively, the antenna gain for AV $m$ and SBS $n$. $h_{mnk}$, and $L_{mn}$ represent, respectively, the Rayleigh fading channel gain at subchannel $k$, and path loss of the downlink between AV $m$ and SBS $n$.  In low-latency V2I, the TTI duration is considered substantially small (in the order of one OFDM symbol duration \cite{7990497}), and, thus, the channel gain $h_{mnk}$ (and the rate in \eqref{rate_d}) can be  considered constant within the course of transmitting one packet.

Analogous to \eqref{tau_d}, we can find the uplink transmission latency $\tau_u(m,n;s)$, using the uplink SINR:
\begin{align}\label{gamma_u}
\gamma_u(m,n,k') = \frac{G_mG_nP_mh_{mnk'}L_{mn}}{\sum_{m' \neq m}{P_{m'n}} + \sigma_n^2},
\end{align}
where $P_m$ is the transmit power of the AV $m$ and $P_{m'n}$ is the received power from an interfering AV $m'$ that transmits over subchannel $k'$.  Given that the uplink traffic in V2I mainly has small-size packets (e.g., to request a service or send a short control packet), we assume that uplink transmissions are managed within one subchannel $k' \notin \mathcal{K}$. 


\subsection{Computational latency in an SBS}
The computational latency, also known as the execution delay, for managing a task at an edge computing machine is a random variable. Such randomness mainly stems from the fact that the execution time of a task can depend on the data to be processed \cite{AMINISALEHI201696}.  For example, the execution time for compressing an HD-map depends on the quality or level of details in the map. Furthermore, we consider a heterogeneous type of machines across different SBSs, i.e., a single task may require different computational latency, once processed at different SBSs. Due to this uncertainty, it is common to define a probability mass function (pmf) for the computational latency of an arbitrary task $s$ being completed by the edge machine of SBS $n$ by $t_k$ number of time steps\footnote{Although processing latency is a continuous random variable, this metric is commonly quantized into small time steps in a machine~\cite{AMINISALEHI201696}.}~\cite{AMINISALEHI201696}.

Without loss of generality, we consider one edge computing machine per SBS. In a V2I scenario, multiple tasks can be assigned to an SBS. In this case, the tasks assigned to a machine (of SBS) $n$  are batched at the queue $\mu(n) \subseteq\mathcal{S}$. Here, we note that the completion time pmf of an arbitrary task $s$ depends on other tasks in $\mu(n)$. We can now find the completion time pmf of a queued task $s \in \mu(n)$, $\mathbbm{P}_{s,n}(t_k;\mu(n))$, by convolving the execution time  pmf of tasks ahead of $s$ in the queue $\mu(n)$ \cite{AMINISALEHI201696}:
\begin{align}\label{pmf}
&\mathbbm{P}_{s,n}(t_k;\mu(n)) = \Conv_{s'\in \mu_s(n,t_k)}\mathbbm{P}_{s',n}(t_k),\notag\\ 
&=\mathbbm{P}_{s_1,n}(t_k) \circledast \mathbbm{P}_{s_2,n}(t_k) \circledast \cdots \circledast \mathbbm{P}_{s_{|\mu_s(n,t_k)|},n}(t_k),
\end{align}
where $\circledast$ denotes the convolution operation and $\mu_s(n,t_k)=\{s_1,s_2, \cdots s_{|\mu_s(n,t_k)|}\}$ is the set of all tasks ahead of $s$ in queue of machine $n$. Using the completion time  pmf in \eqref{pmf}, the SBS can find the expected completion time for its associated tasks:
\begin{align}\label{tau_p_expected}
\tau_c(\mu(n),n) &= \sum_{m \in \mu(n)}\mathbbm{E}\left[\tau_p(m,n;s)\right], \notag\\
&= \sum_{m \in \mu(n)}\sum_{t_k=1}^{t_{\text{max}}^s}t_k\mathbbm{P}_{s,n}(t_k;\mu(n)),
\end{align} 
where $t_{\text{max}}^s$ is the maximum number of processing time steps before task $s$ is dropped. 
\subsection{Problem Formulation}
Considering the proposed framework, the goal is to maximize the reliability of the V2I system, while considering the latency requirements of each AV's task. With this in mind, we define the Bernoulli random variable $\kappa$, such that
\begin{align}\label{y}
\kappa(m,n;s) = \begin{cases}
1, &\text{if} \,\,\,\,\tau(m,n;s) \leq \tau_{\text{th}}(s),\\
0, &\text{otherwise},
\end{cases}
\end{align}
where $\tau_{\text{th}}(s)$ is the tolerable E2E delay for task $s$. That is, $\kappa(m,n;s)=1$, if task $s$ is successfully managed within a time less than its E2E latency constraint, otherwise, $\kappa(m,n;s)=0$. Using \eqref{y} and the 3GPP definition for reliability \cite{3gpp3}, we can define the V2I system reliability as
\begin{align}\label{reliability}
\!\!\!\eta\left(\boldsymbol{x},\left[\boldsymbol{w}_1, \boldsymbol{w}_2, \cdots, \boldsymbol{w}_N\right]\right) \!=\! \frac{1}{M}\sum_{n \in \mathcal{N}}\sum_{m \in \mathcal{M}} \!\!x_{mn}\kappa(m,n;s),
\end{align}
with $\boldsymbol{w}_n = \left[w_{1n}, w_{2n}, \cdots w_{Mn}\right]$ where each element is the bandwidth allocation variable $w_{mn}=w\sum_{k}y_{mnk}$ for SBS $n$. Meanwhile, $\boldsymbol{x}$ denotes the AV-SBS association vector with elements $x_{mn}=1$, if AV $m$ is associated with the SBS $n$, otherwise $x_{mn}=0$. 
In fact, \eqref{reliability} implies that the network reliability depends on: 1) Association of AVs to SBSs; and 2) Each SBS's bandwidth allocation. Depending on the subset of AVs  associated with an SBS, the computational latency will change, as seen in \eqref{pmf}. Thus, to maximize the reliability $\eta$, the challenge is to jointly find an optimal AV-SBS association and resource allocation, while considering the fact that the transmission and computational latencies for one AV is affected by other AVs. 

Therefore, we formulate the joint AV-SBS association and resource allocation problem as: 
\begin{align}
\argmax_{\boldsymbol{x},\left[\boldsymbol{w}_1, \boldsymbol{w}_2, \cdots, \boldsymbol{w}_N\right]} & \,\,\,\,\,\,\eta\left(\boldsymbol{x},\left[\boldsymbol{w}_1, \boldsymbol{w}_2, \cdots, \boldsymbol{w}_N\right]\right),\label{opt1}\\
\text{s.t.,}&\sum_{n\in\mathcal{N}}x_{mn}\leq 1, \forall m \in \mathcal{M},\label{opt2}\\
&\sum_{m\in\mathcal{M}}x_{mn}\leq M, \forall n \in \mathcal{N},\label{opt3}\\
&\sum_{m\in\mathcal{M}}\sum_{k\in\mathcal{K}}x_{mn}y_{mnk}\leq K, \forall n \in \mathcal{N},\label{opt4}\\
&\sum_{m\in\mathcal{M}}x_{mn}y_{mnk}\leq 1, \forall n \in \mathcal{N}, \label{opt5}\\
&x_{mn},y_{mnk} \in \{0,1\}.\label{opt6}
\end{align}
Constraints \eqref{opt2} and \eqref{opt3} imply that each AV is associated to at most one SBS, and each SBS can serve up to $M$ AVs. Moreover, \eqref{opt4} indicates that $K$ subchannels can be used  by each SBS for downlink transmissions, while \eqref{opt5} ensures orthogonal subchannel allocation within each cell. The reliability maximization problem with joint AV-SBS association and resource allocation in \eqref{opt1}-\eqref{opt6} is an optimization problem with minimum unsatisfied relations \cite{Amaldi98} that is NP-hard and difficult to solve. Next, we propose a novel framework to solve this problem.  

\section{Matching Theory for Low-Latency V2I Communications}

To jointly optimize AV-SBS association and bandwidth allocation in \eqref{opt1}-\eqref{opt6}, 
one must find a fast-converging algorithm that can manage the network resources efficiently for dense V2I networks having a large number of AVs and SBSs. To this end, we build our solution based on \emph{matching theory}, a mathematical framework that can yield efficient algorithms for solving combinatorial assignment problems, such as the problem in \eqref{opt1}-\eqref{opt6}. In particular, we model the problem of AVs to SBSs association by using the analogous problem of workers to firms assignment in \emph{labor matching markets} \cite{matchinglabor}. That is, considering a set of workers (AVs) and a set of firms (SBSs), the goal of each worker is to be hired with maximum possible salary, while firms aim to hire a subset of workers that maximize their revenue. Using this two-sided framework, next, we show how the V2I resource management problem can be formulated as a labor matching market.

\subsection{Utility Functions for SBSs and AVs}
To enable low-latency communications in V2I networks, each SBS $n$ aims to select a subset of AVs $\mathcal{M}^n \subseteq \mathcal{M}$ as well as a bandwidth allocation that maximize the following utility (objective) function:
\begin{align}\label{utility1}
	\!\!\!U_n(\mathcal{M}^n; \boldsymbol{w}_n) \!\!=\!\!\!\!\sum_{m \in \mathcal{M}^n} \!\!\left[\frac{\alpha }{w_{mn}}-\tau_t(m,n;s)\right]\! \!-\! \tau_c(\mathcal{M}^n,n), 
\end{align}
where the transmission latency $\tau_t$ is given in \eqref{tau_t}. Here, we note that the computational latency for tasks of AVs in $\mathcal{M}^n$ is random and  unknown a priori. Therefore, in \eqref{utility1}, the SBS considers the expected value of the completion time for the subset of AVs, as per \eqref{tau_p_expected}. Moreover, the first term in \eqref{utility1} is inversely proportional to the bandwidth allocated to an AV $m$, and $\alpha$ is a control parameter. The allocated bandwidth can be seen as the cost of serving an AV and, thus, the first term in \eqref{utility1} will prevent unfair bandwidth allocations. In fact, analogous to the labor matching market, the first term in  \eqref{utility1} is the salary that a firm has to pay for hiring a worker. 

Meanwhile, each AV aims to be associated with an SBS that can minimize its E2E latency. Nonetheless, an AV does not know the computational latency for its task at an SBS. That is because the computational latency of one AV depends on other AVs associated with the same SBS, as captured in \eqref{pmf}. This information can be provided by the SBS. Hence, the utility that an AV $m$ assigns to an SBS $n$ is:
\begin{align}\label{utility2}
 U_m(n;w_{mn})=-\tau_t(m,n;s) T -\mathbbm{E}\left[\tau_p(m,n;s)\right],
\end{align}
where $\tau_t$ depends on the resource allocation vector $\boldsymbol{w}_n$. 

\subsection{V2I Communications as a Matching Problem}
Using the defined utility functions for AVs and SBSs, We define the proposed matching problem as follows:
\definition{An AV-SBS matching is a relation $f: \mathcal{M}\rightarrow \mathcal{N}$} that  satisfies:
\begin{itemize}
	\item[1)] For any AV $m$, $f(m) \in \mathcal
	N \cup \{m\}$. In fact, $f(m)=m$ implies that AV $m$ is not assigned to any SBS. 
	\item[2)] For any SBS $n$, $f(n)=\mathcal{M}^n \subseteq \mathcal{M}$.
	\item[3)] $f(m)=n$, if and only if $m \in f(n)$.
\end{itemize}
This definition also ensures meeting the feasibility constraints in \eqref{opt2} and  \eqref{opt3}.

Furthermore, subject to a bandwidth allocation vectors $\left[\boldsymbol{w}_1, \boldsymbol{w}_2, \cdots, \boldsymbol{w}_N\right]$, a matching $f$, denoted by a pair $(f;\left[\boldsymbol{w}_1, \boldsymbol{w}_2, \cdots, \boldsymbol{w}_N\right])$, is called \emph{individually rational}, if it meets the following conditions: 1) For any AV assigned to an SBS $f(m)$, $w_{mf(m)}>0$; and 2) For any SBS $n$, $ 0 < U_n(f(n);\boldsymbol{w}_n) < \infty$.
The first condition implies that at least one subchannel must be allocated to the AV $m$ by its assigned SBS $f(m)$, otherwise, the AV will be indifferent between being assigned to the SBS $f(m)$ or not. The second condition implies that the  utility of an SBS must be nonnegative and finite, otherwise, there is no need to allocate resources to the AVs in $f(n)$. Our goal is to find a strict core allocation of AVs to SBSs, as defined next.
\definition{An individually rational AV-SBS matching $(f;\boldsymbol{w})$ is a \emph{core allocation}, if there are no pair of SBS-subset of AVs $(n,\mathcal{M}')$ with a bandwidth allocation vector $\boldsymbol{\hat{w}}_n$, that satisfy,
\begin{itemize}
	\item[1)] $U_m(n;\hat{w}_{mn})\!>\! U_m(f(m),w_{mf(m)})$, for all $m \in \mathcal{M}'$, and
	\item[2)] $U_n(\mathcal{M}';\boldsymbol{\hat{w}}_n) > U_n(f(n);\boldsymbol{w}_n)$.
\end{itemize}
In fact, a pair $(n,\mathcal{M}')$ that meets above two conditions can improve their utility by \emph{blocking} the matching $f$ and making a new allocation. In particular, the notion of core allocation guarantees \emph{stability} of the V2I system by preventing undesired SBS-AVs allocations.}

 Nonetheless, finding a core AVs-to-SBSs allocation is challenging, due to the interdependent utilities of the AVs and SBSs that stem from two facts: 1) From \eqref{pmf}, the processing latencies of the assigned AVs to the same SBS are interrelated; and 2) From \eqref{opt4}, allocated bandwidth to one AV depends on the resource allocation to other AVs within the same cell. In fact, classical methods such as the deferred acceptance algorithm \cite{79} fail to yield a core allocation for the V2I problem \cite{matchinglabor}. Thus, we next propose a new algorithm that guarantees finding a core allocation of AVs to SBSs. 
 
 \section{Proposed Matching Algorithm For Joint AV-SBS Allocation and Resource Management}
 
 The key idea for guaranteeing the core allocation is to allow \emph{negotiations} for bandwidth between AVs and SBSs, while performing the AV-SBS association. That is, SBSs can offer a certain E2E latency to each AV (by allocating a number of subchannels), while the AV can accept the offer or reject it for a better allocation with another SBS.

 Building on this idea, we propose a novel algorithm in Table \ref{algo1} that proceeds as follow: Initially at round $j=0$, each SBS allocates one subchannel to each AV. As the algorithm proceeds, each SBS $n$ updates the bandwidth allocation vector, $\boldsymbol{w}_n(j)=[w_{1n}(j),w_{2n}(j),\cdots, w_{Mn}(j)]$, at round $j$, using the following rule: If an AV rejects the offer by an SBS $n$ in round $j-1$, then $w_{mn}(j)=w_{mn}(j-1)+w$; otherwise, $w_{mn}(j)=w_{mn}(j-1)$.

 At any round $j$, subject to the bandwidth allocation $\boldsymbol{w}_n(j)$, each SBS $n \in \mathcal{N}$ selects a subset of AVs that maximizes its utility in \eqref{utility1}. The process of AV selection is performed in Step 2. Subsequently in Step 3, each SBS offers association to its selected AVs in Step 2. Any offer in round $j-1$ that was not rejected will be repeated in round $j$. In Step 4, each AV tentatively accepts the offer that maximizes its utility in \eqref{utility2} and rejects the rest. Finally in Step 5, SBSs update their bandwidth allocation vectors $\boldsymbol{w}_n(j), \forall n \in \mathcal{N}$, according to the rule explained previously. The algorithm converges once no offer is rejected by the AVs. Prior to proving the convergence of the proposed algorithm to a core allocation, we make the following preliminary observations:
 \begin{remark}\label{lemma1}
 	Every AV has at least one association offer in each round. 
 \end{remark}This can be easily verified by noting that at Step 1, each SBS extends an association offer to all AVs. Since at any round, each AV tentatively accepts one offer, the AV's allocated bandwidth remains constant. Moreover, from Step 3, we note that any offer that is not rejected must be repeated in the next round. Therefore, at any round, AVs have at least one association offer.

	\begin{table}[t!]
	\footnotesize
	\centering
	\caption{
		\vspace*{-0cm}Proposed AV-SBS Association and Resource Management Algorithm }\vspace*{-0.2cm}
	\begin{tabular}{p{8 cm}}
		\hline \vspace*{-0em}
		\textbf{Inputs:}\,\,$\mathcal{M}$, $\mathcal{N}$, $\mathcal{K}$, $\mathcal{S}$.\\
		\hspace*{1em}
		\textbf{Step 1:} Let $t=0$. Each SBS allocates one subchannel to each AV. Each SBS sends proposal to all AVs, notifying them of their E2E latency, according to \eqref{utility2}.\\
		\hspace*{-0.5em} \While{there are proposal rejections}{

			\textbf{Step 2:} At each round $j$, each SBS $n$ selects a subset of AVs $\mathcal{M}^n$ that maximize the utility $U_n(\mathcal{M}^n;\boldsymbol{w}_n(j))$, with the bandwidth allocation vector $\boldsymbol{w}_n(j)=[w_{1n}(j),w_{2n}(j),\cdots, w_{Mn}(j)]$, where $w_{mn}=w\sum_{k}y_{mnk}$. Each SBS $n$ sorts AVs in descending order according to a utility $U_n(m)=\frac{\alpha}{w_{mn}}-\tau_t(m,n,s)$. The SBS adds the first AV from the list to $\mathcal{M}^n$, calculates $U_n(\mathcal{M}^n,\boldsymbol{w}_n(t))$, and while this utility is positive, adds other AVs from the ordered list one by one.

			\textbf{Step 3:} Each SBS offers association to all AVs selected in Step 3. Any offer in round $t-1$ that was not rejected will be repeated in round $t$.
			
			\textbf{Step 4:} Each AV that receives one or more offers rejects all, except the one that maximizes its utility in \eqref{utility2}.
			
			\textbf{Step 5:} If an AV rejects the offer by an SBS $n$ in round $t-1$, let $w_{mn}(t)=w_{mn}(t-1)+w$; otherwise, $w_{mn}(t)=w_{mn}(t-1)$.
			
			$t+1\gets t$
		} 			
		\hspace*{0em}\textbf{Output:}\,\,Strict core allocation $f^*$\vspace*{0em}\\
		\hline
	\end{tabular}\label{algo1}\vspace{-2em}
\end{table}

 \begin{lemma}\label{lemma2}
	Each AV will have exactly one offer after a finite number of rounds and the algorithm converges.
\end{lemma}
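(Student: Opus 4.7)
The plan is to establish two facts: (i) the while-loop of the algorithm terminates after finitely many rounds, and (ii) at termination each AV holds exactly one outstanding offer. Since Remark \ref{lemma1} already guarantees that every AV receives at least one offer in every round, the proof reduces to bounding the total number of rejections and then invoking the termination condition of the while-loop.

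For (i), the key observation is the monotonicity built into Step 5: whenever AV $m$ rejects SBS $n$ in round $t-1$, the bandwidth $w_{mn}(t)$ that $n$ would subsequently allocate to $m$ strictly increases by the subchannel quantum $w$. I would then track the per-AV marginal $\alpha/w_{mn}-\tau_t(m,n;s)$ that enters $U_n$ in \eqref{utility1}. This marginal is strictly decreasing in $w_{mn}$: the first term shrinks directly, while $\tau_t$ can only weakly decrease since, by \eqref{tau_d} and \eqref{rate_d}, adding subchannels can only reduce $\tau_d$, and $\tau_u$ is computed on a separate subchannel $k'\notin\mathcal{K}$. Consequently, once $w_{mn}$ exceeds a finite threshold $w^{\star}_{mn}$ (depending on the fixed parameters and on $\tau_t$, $\tau_c$), the marginal turns non-positive and including $m$ strictly lowers $U_n$ because it contributes a non-positive salary term together with a strictly positive increment to the computational latency term $\tau_c(\mathcal{M}^n,n)$ in \eqref{tau_p_expected}. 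The greedy rule in Step 2 therefore drops $m$ from $\mathcal{M}^n$, no further offer is made by $n$ to $m$, and $w_{mn}$ stops growing. Since $w$ is a fixed quantum, each pair $(m,n)$ can trigger at most $\lceil w^{\star}_{mn}/w\rceil$ rejections, and summing over the finite product $\mathcal{M}\times\mathcal{N}$ gives a finite overall bound on rejections; termination follows.

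For (ii), the while-loop halts precisely when no offer is rejected. Yet by Step 4, every AV that receives two or more offers must reject all except the utility-maximizing one. Together with Remark \ref{lemma1}, which ensures at least one offer per AV in every round, termination forces exactly one outstanding offer per AV, which defines the final strict core allocation $f^{\star}$.

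The main obstacle I anticipate is the careful bookkeeping behind the claim that $m$ is genuinely expelled from $\mathcal{M}^n$ once its marginal becomes non-positive, because Step 2 orders AVs by marginal utility and terminates when the cumulative $U_n$ would turn non-positive rather than examining marginals in isolation. One must argue that appending an AV with non-positive marginal to any candidate subset strictly decreases $U_n$ (negative-or-zero salary-minus-latency contribution plus strictly positive additional $\tau_c$), so such an AV is never part of the optimizing $\mathcal{M}^n$ in an individually rational round. A secondary subtlety is that coupled dynamics across SBSs — where a bandwidth increase for one AV can alter which AVs an SBS selects, potentially cascading into further rejections — must not produce an infinite chain; this is ruled out by the per-pair rejection bound already established, since every rejection in the system is charged to a specific $(m,n)$ pair with finite budget.
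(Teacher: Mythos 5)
Your proof is correct and follows essentially the same route as the paper's: every AV always holds at least one offer (Remark~\ref{lemma1}), each rejection forces a bandwidth increment via Step~5, and since the SBS utility in \eqref{utility1} is decreasing in the allocated bandwidth, each SBS eventually stops proposing to a rejecting AV, so rejections (and hence rounds) are finite and termination leaves exactly one offer per AV. Your version is merely more explicit — the per-pair threshold $w^{\star}_{mn}$, the finite rejection budget over $\mathcal{M}\times\mathcal{N}$, and the check that a non-positive marginal excludes an AV from the maximizer in Step~2 are all details the paper leaves implicit.
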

 \begin{proof}
 	From Lemma \ref{lemma1}, we note that each AV has at least one offer at each round. Moreover, according to the bandwidth allocation rule in Step 5, the bandwidth must be increased for the AV by all proposing SBSs, except the one that its offer is accepted. Meanwhile, from \eqref{utility1}, the utility of SBSs is a decreasing function of the allocated bandwidth. Thus, as algorithm proceeds, the number of offers for each AV decreases until each AV receives only one offer that it accepts. Since no rejection is made at that point, the algorithm converges.  
 \end{proof}
 
  \begin{lemma}\label{lemma3}
 	The proposed algorithm in Table \ref{algo1} converges to an individually rational allocation of AVs and SBSs. 
 \end{lemma}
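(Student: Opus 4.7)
The plan is to verify the two defining conditions of an individually rational matching at the terminal allocation produced by the algorithm, relying on Lemma \ref{lemma2} for termination and then reading off each condition from the algorithm's update rules.

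First, I would handle condition (1): every matched AV receives strictly positive bandwidth. This is essentially a monotonicity argument. Step 1 initializes $w_{mn}(0)=w>0$ for every SBS--AV pair. The update rule in Step 5 is non-decreasing: in each round, $w_{mn}$ is either held fixed or increased by $w$ in response to a rejection, so the bandwidth never shrinks. Thus at the terminal round $j^{\star}$ (which exists by Lemma \ref{lemma2}), any AV $m$ matched to $f(m)=n$ satisfies $w_{mn}(j^{\star}) \geq w > 0$, which immediately yields condition (1).

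Next, I would address condition (2): every SBS has $0 < U_n(f(n);\boldsymbol{w}_n) < \infty$. Finiteness is immediate from the form of \eqref{utility1}: the term $\alpha/w_{mn}$ is finite because $w_{mn}\geq w>0$, the transmission latency $\tau_t$ in \eqref{tau_t}--\eqref{tau_d} is bounded whenever the downlink rate is strictly positive, and the expected processing latency in \eqref{tau_p_expected} is bounded above by $t_{\text{max}}^s$. For the strict positivity, I would exploit the greedy construction in Step 2: at every round $j$, the SBS sorts AVs in descending order of $\alpha/w_{mn}-\tau_t(m,n;s)$ and adds them to $\mathcal{M}^n$ one at a time only while $U_n(\mathcal{M}^n;\boldsymbol{w}_n(j)) > 0$. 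By Lemma \ref{lemma2}, at the terminal round no offer is rejected, so the set of AVs matched to SBS $n$ coincides with the offer set of Step 3, which in turn coincides with $\mathcal{M}^n$ selected by Step 2 in that round. Therefore $U_n(f(n);\boldsymbol{w}_n) = U_n(\mathcal{M}^n;\boldsymbol{w}_n) > 0$ by the greedy stopping criterion.

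I expect the main obstacle to be reconciling the greedy selection of Step 2 with the carry-over rule of Step 3, which forces previously non-rejected offers to be repeated and could in principle enlarge $f(n)$ beyond the fresh greedy set. I would resolve this by an inductive argument on rounds: if the offer set at round $j-1$ yields positive utility, and Step 2 at round $j$ re-solves the greedy selection over all AVs including those carried over, then any carried-over AV that remains in the matched set is also in the greedy selection (otherwise it would have been dropped and, in the next round, would be un-offered, triggering rejection and contradicting convergence). SBSs with empty $\mathcal{M}^n$ at termination correspond to un-matched SBSs whose individual rationality condition is vacuous. Once this consistency is established, both conditions hold at $(f;\boldsymbol{w})$, so the terminal allocation is individually rational.
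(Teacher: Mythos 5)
Your proof follows essentially the same two-part structure as the paper's: condition (1) via the monotonicity of the bandwidth updates from a strictly positive initialization, and condition (2) via the greedy stopping criterion of Step 2 that only admits AVs while $U_n(\mathcal{M}^n;\boldsymbol{w}_n)>0$. The one place you go beyond the paper is worth noting: the paper's proof silently identifies $f^*(n)$ with the greedy set $\mathcal{M}^n$ of the terminal round, whereas Step 3's rule that non-rejected offers must be repeated could in principle make the matched set larger than the fresh greedy selection; your inductive consistency argument (a carried-over AV not in the terminal greedy set would be dropped, un-offered, and hence contradict the no-rejection convergence condition) addresses exactly this gap, and your explicit justification of finiteness (via $w_{mn}\geq w$ and the bound $t_{\text{max}}^s$ on the expected processing time) is also spelled out where the paper merely asserts it. Both arguments reach the same conclusion by the same route; yours is simply the more complete version.
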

  \begin{proof}
 Let $(f^*,\left[\boldsymbol{w}^*_1, \boldsymbol{w}^*_2, \cdots, \boldsymbol{w}^*_N\right])$ be the outcome at the convergence point of the algorithm after $j^*$ iterations. Since in Step 1, at least one subchannel is allocated to AVs, and from Step 5, the bandwidth allocation increases at each round, then, $w_{mf^*(m)}(j^*)>0$ for all $m \in \mathcal{M}$. From the AV selection process in Step 2, the AV subset $\mathcal{M}^n$ selected by an arbitrary SBS $n$ will yield a finite, positive utility $U_n(\mathcal{M}^n;\boldsymbol{w}_n)$. Thus, after $j^*$ iterations, $U_n(f^*(n),\boldsymbol{w}_n^*)$ will be positive, which concludes the proof.
 \end{proof}
 
\theorem{The proposed algorithm in Table \ref{algo1} is guaranteed to converge to a core association of AVs and SBSs.}
\begin{proof}
Lemma \ref{lemma2} shows the convergence of proposed algorithm after an arbitrary $j^*$ number of iterations. Let the outcome of algorithm be $(f,\left[\boldsymbol{w}_1(j^*),\boldsymbol{w}_2(j^*),\cdots, \boldsymbol{w}_N(j^*)\right])$. We prove the core allocation of the outcome by contradiction. That is, suppose that this outcome is not a strict core allocation. Since from Lemma \ref{lemma3}, $(f,\left[\boldsymbol{w}_1(j^*),\boldsymbol{w}_2(j^*),\cdots, \boldsymbol{w}_N(j^*)\right])$ is individually rational, there must be a blocking pair of an SBS and a subset of AVs, $(n,\mathcal{M}')$, with a bandwidth allocation vector $\boldsymbol{\hat{w}}_n$, such that, $\forall m \in \mathcal{M}'$:
\begin{align}
&U_m(n;\hat{w}_{mn}) > U_m(f(m),w_{mf(m)}(j^*)),\, \text{and} \label{cond1}\\
&U_n(\mathcal{M}';\boldsymbol{\hat{w}}_n) > U_n(f(n);\boldsymbol{w}_n(j^*)).\label{cond2}
\end{align}
From \eqref{cond1}, any AV $m \in \mathcal{M}'$ must never have received an offer from the SBS $n$ with bandwidth allocation $\hat{w}_{mn}$ (or greater than $\hat{w}_{mn}$) at any round of the algorithm. Otherwise, the AV $m$ would have accepted that offer. Meanwhile, since the allocated bandwidth to each AV increases or remains constant after each round (according to the update rule in Step 5), then for an SBS $n$ to form a blocking pair with AVs in $\mathcal{M}'$, the bandwidth allocation must satisfy $\hat{w}_{mn} \geq w_{mf(m)}$, for all $m \in \mathcal{M}'$. However, given that the utility of SBSs is a decreasing function of the allocated bandwidth, then,
\begin{align}\label{eq_proof_3}
	U_n(\mathcal{M}'; \boldsymbol{w}_n(j^*)) \geq U_n(\mathcal{M}'; \boldsymbol{\hat{w}}_n)\!>\!U_n(f(n);\boldsymbol{w}_n(j^*)),
\end{align} 
where the strict inequality in \eqref{eq_proof_3} directly results from \eqref{cond2}. In fact,  \eqref{eq_proof_3} implies that the SBS $n$ will propose to the subset of AVs in $\mathcal{M}'$ with bandwidth allocation vector $ \boldsymbol{w}_n(j^*)$ which  contradicts the initial assumption for the convergence of the algorithm. Therefore, such a blocking pair does not exist and thus,  convergence to a core allocation is guaranteed. 
\end{proof}
\begin{table}[t!] 
	\footnotesize
	\centering
	\caption{\vspace*{-0cm}  Simulation Parameters}\vspace*{-0.2cm}
	\begin{tabular}{|>{\centering\arraybackslash}m{2.4cm}|>{\centering\arraybackslash}m{2.5cm}|>{\centering\arraybackslash}m{2.2cm}|}
		\hline
		\bf{Notation} & \bf{Parameter} & \bf{Value} \\
		\hline
		$N$ & Number of SBSs & $10$\\
		\hline
		$M$ & Number of AVs & $10$ to $40$\\
		\hline
		$P_n$ & Transmit power of an SBS & $100$ mW\\
		\hline
		$P_m$ & Transmit power of an AV & $10$ mW\\
		\hline
		$W$ & System bandwidth & $100$ MHz\\
		\hline
		$I_d$ & Downlink packet size & $5$ kbits\\
		\hline
		$I_u$ & Uplink packet size & $100$ bits\\
		\hline
		$G_n,G_m$ & Antenna gains & $1$\\
		\hline
		$\sigma_n^2$ & Noise power& $-90$ dBm\\
		\hline
		$\alpha$ & Control parameter& $20$k\\
		\hline
		$w$ & Bandwidth of subchannel& $180$kHz\\
		\hline
		
	\end{tabular}\label{tabsim1}\vspace{-2em}
\end{table}
\section{Performance Evaluation}

\subsection{Simulation Parameters}
We consider an area of size $100$~m $\times$ $100$~m with AVs and SBSs located randomly across the area. We consider $10$ SBSs,  while the number of AVs varies from $10$ to $40$. Statistical results are averaged over large number of independent runs. Simulation parameters are summarized in Table \ref{tabsim1}. 

Furthermore, we assign a random task to each AV from  three task types in the set $\mathcal{S}=\{s_1,s_2,s_3\}$. Depending on the edge computing capabilities at an SBS,  we consider two types of SBSs with different latency distributions to manage the tasks in $\mathcal{S}$. For each task $s_i \in \mathcal{S}$ processed at a machine type $j\in \{1,2\}$, the pmf of the computational latency follows Gaussian distribution, $N(\mu_{ij}, \sigma_{ij}^2)$, with mean $\mu_{ij}$ and variance $\sigma^2_{ij}$ \cite{AMINISALEHI201696}, specified in Table \ref{tabsim2}. The tolerable E2E latency for each task type is also specified in Table \ref{tabsim2}. We compare the performance of our proposed algorithm with both Max-SINR and Max-RSSI associations.


\begin{table}[t!] 
	\footnotesize
	\centering
	\caption{\vspace*{-0cm}Mean and Standard Deviation of the Computational Latency Distribution}\vspace*{-0.1cm}
	\begin{tabular}{|>{\centering\arraybackslash}m{.9cm}|>{\centering\arraybackslash}m{1.6cm}|>{\centering\arraybackslash}m{1.6cm}|>{\centering\arraybackslash}m{1.6cm}|>{\centering\arraybackslash}m{1.5cm}|}
		\hline
		 &  {Task Type 1 $\tau_{\text{th}} = 20$ ms} & {Task Type 2 $\tau_{\text{th}} = 50$ ms}  & {Task Type 3 $\tau_{\text{th}} =\! 100$ ms} \\
		\hline
		Machine Type 1 & $(\mu_{11},\sigma_{11}) = (1,0.5)$ & $(\mu_{21},\sigma_{21}) = (2,0.5)$ & $(\mu_{31},\sigma_{31}) = (5,0.5)$\\
		\hline
	Machine Type 2 & $(\mu_{12},\sigma_{12}) = (2,0.5)$&$(\mu_{22},\sigma_{22}) = (4,0.5)$ &$(\mu_{32},\sigma_{32}) = (10,0.5)$\\
		\hline
		
	\end{tabular}\label{tabsim2}\vspace{-1em}
\end{table}

\subsection{Simulation Results}

\begin{figure}[t!]
	\centering
	\centerline{\includegraphics[width=6cm]{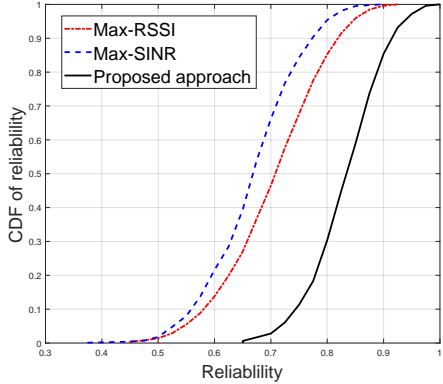}}\vspace{-0.2cm}
	\caption{\small CDF of the reliability.}\vspace{-.4cm}
	\label{fig1}
\end{figure}

Fig. \ref{fig1} shows the cumulative distribution function (CDF) of the reliability in the V2I network with $M=40$ AVs and $N=10$ SBSs. The results in Fig.  \ref{fig1} show that the proposed algorithm significantly outperforms the max-SINR and max-RSSI schemes. For example, the probability of achieving reliability less than $0.8$ is only $30\%$ in the proposed scheme, while this probability for the max-SINR and max-RSSI is $95\%$ and $85\%$, respectively. Such performance gain is mainly due to accounting for the E2E latency, while performing AV-to-SBS association and bandwidth allocation, while the baseline schemes do not optimize the E2E latency.

\begin{figure}[t!]
	\centering
	\centerline{\includegraphics[width=6.2cm]{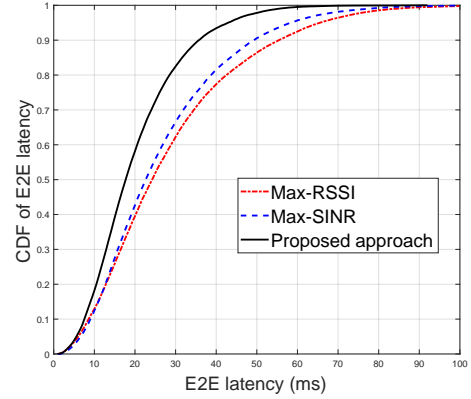}}\vspace{-0.2cm}
	\caption{CDF of the E2E latency.}\vspace{-.4cm}
	\label{fig2}
\end{figure}

In Fig. \ref{fig2}, we show the CDF of the E2E latency and compare the performance for the three approaches  in a V2I network with $M=40$ AVs and $N=10$ SBSs. First, we can observe that the proposed scheme can guarantee $50$ ms E2E latency with a high probability close to $99\%$. However, both baseline approaches can only satisfy this E2E latency requirement with probabilities less than $90\%$. For a large V2I network with $M=40$ AVs, the results in Fig. \ref{fig2} show that the proposed algorithm can effectively minimize the E2E latency. Clearly, the E2E latency will reduce as the network load decreases.
\begin{figure}[t!]
	\centering
	\centerline{\includegraphics[width=6.2cm]{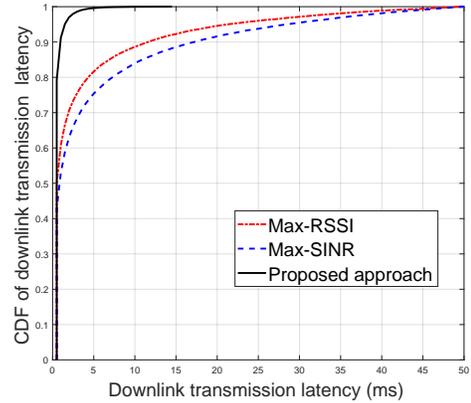}}\vspace{-0.2cm}
	\caption{CDF of the downlink transmission latency.}\vspace{-.4cm}
	\label{fig3}
\end{figure}

To show the impact of both transmission and computational  latencies on the overall E2E latency, the CDF of these metrics are shown, respectively, in Fig. \ref{fig3} and \ref{fig4}, for $M=40$ AVs and $N=10$ SBSs. Comparing the values for the transmission latency in Fig. \ref{fig3}, with computational latency in  Fig. \ref{fig4}, we can observe that the computational latency at the edge computing machine is substantial (can be up to $100$ ms) and cannot be neglected. Moreover, Fig. \ref{fig4} shows that the proposed scheme yields more efficient AV-to-SBS association, compared with the baseline algorithms, as it accounts for the computational latency and the amount of load at each edge machine. This feature can be viewed as load balancing, where instead of taking the number of AVs into account, our approach considers the computational loads of the assigned tasks at each SBS.

In Fig. \ref{fig5}, we show the average downlink data rate per AV, versus the network size. Clearly, the average rate per AV decreases as more AVs exist in the network. The results in Fig. \ref{fig5} show that the proposed algorithm outperforms the baseline approaches substantially in terms of data rate. For instance, the performance gains for a V2I network with $M=20$ AVs are $49\%$ and $90\%$, respectively, compared with the max-RSSI and max-SINR schemes. 

Finally, Fig. \ref{fig6} shows the number of iterations (with $95\%$ confidence error bars) of the proposed algorithm, versus the number of AVs.  The results in Fig. \ref{fig6} demonstrate that,  even for large V2I networks with $10$ SBSs and $30$ AVs, the number of iterations will not exceed $60$. Moreover, the results show that the number of iterations is polynomial with respect to the network size.

\begin{figure}[t!]
	\centering
	\centerline{\includegraphics[width=6.2cm]{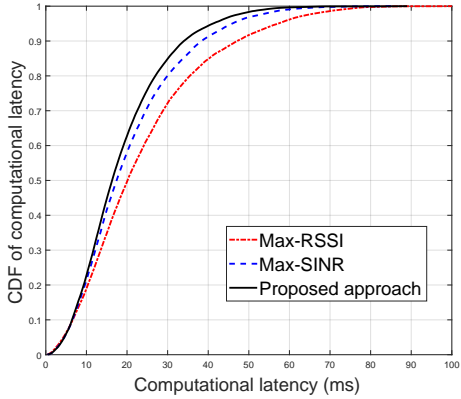}}\vspace{-0.1cm}
	\caption{CDF of the computational latency.}\vspace{-.4cm}
	\label{fig4}
\end{figure}

\begin{figure}[t!]
	\centering
	\centerline{\includegraphics[width=6.2cm]{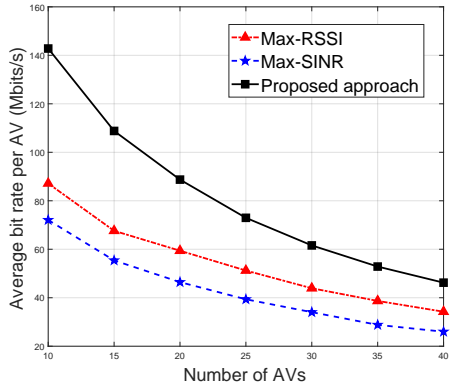}}\vspace{-0.1 cm}
	\caption{Average downlink rate versus the number of AVs.}\vspace{-.3cm}
	\label{fig5}
\end{figure}

\begin{figure}[t!]
	\centering
	\centerline{\includegraphics[width=6.2cm]{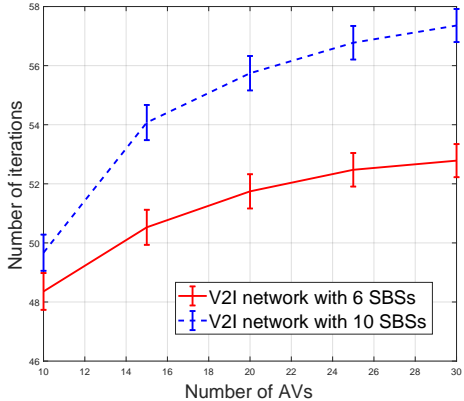}}\vspace{-0cm}
	\caption{Number of iterations versus the network size.}\vspace{-.5cm}
	\label{fig6}
\end{figure}

\section{Conclusions}
In this paper, we have proposed a novel framework for ultra reliable, low latency vehicles-to-infrastructure communications for autonomous vehicles. We have shown that the proposed framework can maximize the V2I network reliability, by jointly accounting for the interdependent computational delays for AVs, along with the transmission latency in the wireless network. In this regard, we have proposed a novel algorithm, based on the concept of labor matching markets, that allows distributed association of AVs with SBSs, while taking into account  the limited computational and bandwidth resources of each SBS. Furthermore, we have proved the convergence of the proposed algorithm to a core allocation of AVs to SBSs. Simulation results have shown the various merits of the proposed scheme.
\def\baselinestretch{.88}
\bibliographystyle{IEEEbib}
\bibliography{references,omid}
\end{document}